\def\UrlAlphabet{%
\do\a\do\b\do\c\do\d\do\e\do\f\do\g\do\h\do\i\do\j%
\do\k\do\l\do\m\do\n\do\o\do\p\do\q\do\r\do\s\do\t%
\do\u\do\v\do\w\do\x\do\y\do\z\do\A\do\B\do\C\do\D%
\do\E\do\F\do\G\do\H\do\I\do\J\do\K\do\L\do\M\do\N%
\do\O\do\P\do\Q\do\R\do\S\do\T\do\U\do\V\do\W\do\X%
\do\Y\do\Z}
\def\UrlDigits{\do\1\do\2\do\3\do\4\do\5\do\6\do\7\do\8\do\9\do\0}
\g@addto@macro{\UrlBreaks}{\UrlOrds}
\g@addto@macro{\UrlBreaks}{\UrlAlphabet}
\g@addto@macro{\UrlBreaks}{\UrlDigits}
\newtheorem{theorem}{Theorem}
\newtheorem{definition}[theorem]{Definition}
\newtheorem{example}[theorem]{Example}
\newtheorem{problem}[theorem]{Problem}
\newtheorem{remark}[theorem]{Remark}
\newenvironment{proof}{\noindent{\em Proof:}}{$\Box$~\\}
\providecommand{\U}[1]{\protect\rule{.1in}{.1in}}
\newcommand{\mult}{{\texttt{mult}}}
\newenvironment{glists}[4]{
\begin{list}{}{
\setlength{\labelwidth}{#2}
\setlength{\labelsep}{#3}
\setlength{\leftmargin}{#1}
\addtolength{\leftmargin}{\labelwidth}
\addtolength{\leftmargin}{\labelsep}
\setlength{\parsep}{#4}
\setlength{\topsep}{\parsep}
\setlength{\itemsep}{\parsep}
\setlength{\listparindent}{0in}
}
}{
\end{list}
}
\newcommand{\iteml}[1]{\item[#1 \hfill]}
\title{Improving Angular Speed Uniformity\\by Piecewise Radical Reparameterization}
\author{Hoon Hong\institute{Department of Mathematics\\
North Carolina State University\\
Box 8205, Raleigh, NC 27695, USA}\email{hong@ncsu.edu}
\and Dongming Wang
\institute{LMIB -- IAI -- School of Mathematical Sciences\\
Beihang University\\
Beijing 100191, China}
\email{Dongming.Wang@cnrs.fr}
\and Jing Yang\thanks{Corresponding author.}\institute{SMS -- HCIC -- School of Mathematics and Physics\\
Center for Applied Mathematics of Guangxi\\
Guangxi Minzu University\\
Nanning 530006, China}\email{yangjing0930@gmail.com}
}
\begin{document}
\maketitle

\begin{abstract}
For a rational parameterization of a curve, it is desirable
that  its angular speed is as uniform as possible. Hence, given a  rational parameterization,  one wants to find {\em re}-parameterization with better uniformity.  One natural way is to use {\em piecewise} rational reparameterization.  However, it turns out that
the piecewise rational reparameterization does {\em not} help  when the angular speed of the given rational parameterization is zero at some points on the curve.
In this paper, we show how to overcome the challenge by  using piecewise {\em radical}
reparameterization.
\end{abstract}

\section{Introduction}

Parametric curves and surfaces are fundamental objects that are most
frequently used in computer aided geometric design. A given curve or surface
may have many different parameterizations, of which some may possess better
properties and thus are more suitable for certain applications than the
others. Thus, one often needs to convert one parameterization into
another, i.e., to {\em re}-parameterize the given parameterization (see, e.g.,
\cite{CFMS2001C,F2006R,F1997O,J1997A,K1991C,LZZL2005C,PB1989C,SV2001O,SWP2008R}%
). In this paper, we focus our investigation on an important class of
parameterizations, called uniform (angular-speed) parameterizations, where the
distribution of points are determined by the local curvature and show how to
construct such reparameterizations for a specific class of curves.

Uniform parameterization has been studied in a series of papers (see
\cite{HWY2013A,K1991C,PB1989C,YWH2012IC0,YWH2012IMUP,YWH2013I,YWH2013IC1} and
references therein). The authors have defined a function of angular speed
uniformity to measure the quality of any given parameterization of a plane
curve and proposed a method to compute its uniform reparameterization.
However, the computed reparameterization is irrational in most cases (with
straight lines as exceptions). For the sake of efficiency, a framework has
been proposed for the computation of rational approximations of uniform
parameterizations \cite{HWY2013A}. Four different methods of
reparameterization (i.e., optimal reparameterization with fixed degree,
$C^{0}$ and $C^{1}$ optimal piecewise reparameterization, and nearly optimal
$C^{1}$ piecewise reparameterization) have been integrated into this
framework. They have also been generalized to compute uniform quasi-speed
reparameterizations of parametric curves in $n$-dimensional space.

However, there is still a major challenge: all the above-mentioned methods  do {\em not} work well   when the angular speed of the given rational parameterization is zero at some points on the curve. This is due to an intrinsic
property of the angular speed function \cite{YWH2013I}:

\begin{quote}
\emph{Let $\omega_{p}$ be an angular speed function of a curve $p$ and $r$ be
a proper transformation. Then
\begin{equation}
\omega_{p\circ r}=(\omega_{p}\circ r)\cdot r^{\prime}.\label{prop:w}%
\end{equation}
}
\end{quote}

\noindent Uniformizing the angular speed can be seen as modifying the angular speed
value at each point under the constraint \eqref{prop:w} iteratively until all
the values are equal to the average. However, the constraint indicates that
$\omega_{p\circ r}$ will never reach the average value for any rational
$r^{\prime}$ when $\omega_{p}(t)=0$ for some $t$.

In this paper, we propose to  overcome  the challenge by  using  {\em radical}
transformations instead of rational ones.  We show that radical
transformations allow one    to increase the angular speed toward the
average value  at the points where the angular speed is zero.  Then   we adapt  the idea of piecewise M\"obius
transformation from \cite{YWH2012IC0} and  the  strategies  in \cite{YWH2013IC1}
to optimally improve the uniformity of angular
speed.

Experiments show that the proposed approach can improve the angular speed
uniformity significantly when  the angular speed of the given
parameterization vanishes at some point on the curve.

The rest of the paper is structured as follows. In Section \ref{sec:problem},
we formulate the problem precisely.   For this, we also introduce all the needed
notations and notions. In  Section \ref{sec:PRR}, we develop mathematical theory
to tackle the problem. In particular, we show how to use  piecewise radical  transformation to
transform an angular speed function with zeros into one without zero in such a way
that the parameters involved are also optimized. In Section
\ref{sec:AlgorithmExample}, we summarize the theoretical results into an algorithm
and illustrate its performance  on  an example. In   Section \ref{sec:implementation}, we briefly discuss implementational  issues/suggestions when floating point arithmetic
is used.

\section{Problem}

\label{sec:problem}

Consider a regular parametric curve
\[
p=(x_{1}(t),\ldots,x_{n}(t)):\,\mathbb{R}\mapsto\mathbb{R}^{n}.
\]
Its \emph{angular speed }$\omega_{p}$ is given by the following expression
(see \cite{HWY2013A}):\footnote[1]{The concept of angular speed is defined in the same manner as the one in physics but for $p'(t)$. The reason is to make the angular speed independent of the origin.}%
\begin{equation}
\omega_{p}=\dfrac{\sqrt{\sum\limits_{1\le i< j\le n}\left\vert
\begin{array}
[c]{cc}%
x_{i}^{\prime\prime} & x_{j}^{\prime\prime}\\
x_{i}^{\prime} & x_{j}^{\prime}%
\end{array}
\right\vert ^{2}}}{\sum\limits_{i=1}^nx_{i}^{\prime}{}^{2}}. \label{eq:w}%
\end{equation}
Recall that the mean $\mu_{p}$ and the variation $\sigma_{p}^{2}$ of
$\omega_{p}$ are given by
\[
\mu_{p}=\int_{0}^{1}{\omega_{p}(t)}\,dt,\quad\mbox{and}\quad\sigma_{p}%
^{2}=\int_{0}^{1}{(\omega_{p}(t)-\mu_{p})^{2}}\,dt.
\]

\begin{definition}
The \emph{angular speed uniformity} $u_{p}$ of a parameterization $p$ is
defined as
\begin{equation}
u_{p}=\left\{
\begin{array}
[c]{ll}%
\dfrac{1}{1+\sigma_{p}^{2}/\mu_{p}^{2}} &\quad \text{if }\mu_{p}\neq0,\\
1 &\quad \text{otherwise.}%
\end{array}
\right.  \label{eq:up}%
\end{equation}

\end{definition}

\begin{example}
[Running]\label{ex:ex1}
Consider the parametric curve  $p=(t,t^{3})$. Then
$$\omega_{p}=\dfrac{6\,t}{9\,t^{4}%
+1},$$
$\mu_p\doteq1.249$ and $u_{p}\doteq0.846$. The goal is to find a proper parameter transformation $r$ over $[0,1]$ in order to increase the uniformity.

\end{example}

Recall the following results from \cite{HWY2013A}.  For any proper
parameter transformation $r$ over $[0,1]$, we have
\begin{equation}\label{eqs:wpr}
\omega_{p\circ r}(s)=(\omega_{p}\circ r)(s)\cdot r^{\prime}(s)
\end{equation}
and
\begin{equation}
u_{p\circ r}=\dfrac{\mu_{p}^{2}}{\eta_{p,r}},\quad\mbox{where}\quad\eta
_{p,r}=\int_{0}^{1}\dfrac{\omega_{p}^{2}}{(r^{-1})^{\prime}}%
(t)\,dt.\label{eq:upr}%
\end{equation}

By \cite[Theorem 2]{YWH2013I}, one can construct a uniform
reparameterization from $p$, but such a reparameterization is irrational in
most cases. Therefore, we proposed several methods in \cite{HWY2013A} to
improve the angular speed uniformity by computing piecewise rational
reparameterizations. However, those methods are not applicable to curves whose
angular speed may vanish over $[0,1]$. Intuitively speaking, uniformizing the
angular speed over $[0,1]$ can be viewed as getting all the values of
$\omega_{p}(t)$ (for all $t\in[0,1]$) as close to $\mu_{p}$ as possible.

If $r$ is a continuous rational function over $[0,1]$, then $r^{\prime}$ is
bounded. Suppose that $\omega
_{p}(t_{0})=0$ for some $t_{0}\in[0,1]$ and $\mu_{p}\neq0$. Then by \eqref{eqs:wpr}, there must
exist some $s_{0}\in[0,1]$ such that $\omega_{p\circ r}(s_{0})=0$, which is
not close to $\mu_{p}$ at all. This makes rational proper parameter
transformations invalid. In what follows, we resort to radical transformations
and develop a new approach to uniformize the angular speed of parametric
curves which has zeros over $[0,1]$.

Let $p$ be a parametric curve. Without loss of generality, we assume that
\[
p^{\prime}(t)=(x_{1}^{\prime}(t),\ldots,x_{n}^{\prime}(t))=\Big(\frac
{X_{1}(t)}{W(t)},\cdots,\frac{X_{n}(t)}{W(t)}\Big),
\]
where~$X_{i}(t),W(t)\in\mathbb{R}[t]$~and~$\gcd(X_{1}(t),\ldots,X_{n}%
(t),W(t))=1$. One can verify that
$$
\omega_{p}=\dfrac{\sqrt{F}}{\sum\limits_{i}X_{i}{}^{2}}, 
\quad\text{where}\quad
F=\sum\limits_{i\neq j}\left\vert
\begin{array}
[c]{cc}%
X_{i}^{\prime} & X_{j}^{\prime}\\
X_{i} & X_{j}%
\end{array}
\right\vert ^{2}. 
$$
Let $F$ be written as $F=\left(\prod_{i=0}^{k}(t-\tilde{t}_{i}%
)^{2\,\mu_{i}}\right)\zeta(t)$ for positive $k$.
Note that $\{\tilde{t}_{i}:\,\tilde{t}_{i}<\tilde{t}_{i+1}\ \text{for}\ 0\le i<k\}$ contains all the zeros of $F$
over $[0,1]$. It is allowed that some $t_i$'s are not the roots of $F$.
The positive integer $\mu_{i}\in\mathbb{N}$ is called the \emph{multiplicity} of
$\tilde{t}_{i}$ in $\omega_{p}$ and denoted by ${\mathtt{mult}}(\omega
_{p},\tilde{t}_{i})$. If $\omega_{p}(\tilde{t}_{i})\neq0$, then
${\mathtt{mult}}(\omega_{p},\tilde{t}_{i})=0$.

Let
\[
T=(t_{0},\ldots, t_{N}), \quad S=(s_{0},\ldots, s_{N}),\quad Z=(z_{0},\ldots,
z_{N}),\quad\alpha=(\alpha_{0},\ldots,\alpha_{N-1})
\]
be sequences such that

\begin{itemize}
\item $0=t_{0}<\cdots<t_{N}=1$, $0=z_{0}<\cdots<z_{N}=1$, $0=s_{0}%
<\cdots<s_{N}=1$, $0< \alpha_{i}< 1$;

\item at most one of $\omega_{p}(t_{i})=0$ and $\omega_{p}(t_{i+1})=0$ holds
for $0\le i<N$, that is, the successive appearance of two zeros of $\omega
_{p}$ are not allowed;

\item the multiplicity of $t_{i}$ in $\omega_{p}$ is $\mu_{i}$;

\item $\omega_{p}(t)\neq0$ for all $t\in(t_{i}, t_{i+1})$.
\end{itemize}


\begin{definition}
[Elementary Piecewise Radical Transformation]\label{def:prt} Let $p$ be a
parametric curve with $T, S$ defined above. Then $\varphi$ is called an
\emph{elementary piecewise radical transformation} associated to $p$ if
$\varphi$ has the following form:
\vspace{-.5em}
\[
\varphi(s) = \left\{
\begin{array}
[c]{cll}%
\vdots~ &  & \\
\varphi_{i}(s) & \quad\text{if} & s \in[s_{i} ,s_{i+1}],\\
\vdots~ &  &
\end{array}
\right.\vspace{-.5em}
\]
where
\begin{equation}
\label{eq:varphii}\varphi_{i}(s)= \left\{
\begin{array}
[c]{ll}%
t_{i}+\Delta t_{i}\sqrt[\mu_{i}+1]{ \tilde{s} }\quad & \mbox{if~}\omega
_{p}(t_{i})=0;\\[8pt]%
t_{i}+\Delta t_{i}(1-\sqrt[\mu_{i+1}+1]{1- \tilde{s} })\quad &
\mbox{if~}\omega_{p}(t_{i+1})=0;\\[8pt]%
t_{i}+\Delta t_{i}\cdot\tilde{s}\quad & \mbox{otherwise,}
\end{array}
\right.
\end{equation}
and $\Delta t_{i}=t_{i+1}-t_{i},\, \Delta s_{i}=s_{i+1}-s_{i}$, $\tilde
{s}=(s-s_{i})/\Delta s_{i}$.
\end{definition}

\begin{remark}\
\begin{enumerate}
\item It can be verified that $\varphi(s_{i})=t_{i}$ and $\varphi
(s_{i+1})=t_{i+1}$, which implies that $\varphi$ is with $C^{0}$ continuity.

\item It is allowed that more than one intermediate point lie between two
zeros of $\omega_{p}$ because it can reduce the number of radical pieces and
thus enhance the efficiency of generating points with the new parameterization.
\end{enumerate}
\end{remark}

It can be shown that $\omega_{p\circ\varphi}(s)\neq0$ (see Theorem
\ref{th:avnonzero}). Next let $q=p\circ\varphi$ and thus $q$ has no inflation
point. We adapt the reparameterization methods from \cite{YWH2013IC1} to
increase the uniformity of $\omega_{q}$ to any value close to 1. For this
purpose, we recall the following piecewise M\"{o}bius transformation.

\begin{definition}
[Piecewise M\"{o}bius Transformation]\label{def:pmt} Let $p$ be a parametric
curve with $S, Z, \alpha$ defined above. Then $m$ is called a \emph{piecewise
M\"{o}bius transformation} associated to $p$ if $m$ has the following form:
\[
m(z) = \left\{
\begin{array}
[c]{cll}%
\vdots~ &  & \\
m_{i}(z) & \quad\text{if} & z \in[z_{i} ,z_{i+1}],\\
\vdots~ &  &
\end{array}
\right.
\]
where
\begin{equation}
\label{eqs:moebiuspiece}m_{i}(z)=s_{i}+\Delta s_{i}\cdot\dfrac{(1-\alpha
_{i})\tilde{z}}{(1-\alpha_{i})\tilde{z}+\alpha_{i}(1-\tilde{z})}%
\end{equation}
and $\Delta z_{i}=z_{i+1}-z_{i},\, \Delta s_{i}=s_{i+1}-s_{i}$, $\tilde
{z}=(z-z_{i})/\Delta z_{i}$.
\end{definition}

The problem addressed in this paper may be formulated as follows.

\begin{problem}
Given a parametric curve $p$ with $\omega_{p}(t)=0$ for some $t\in[0,1]$, find
a radical piecewise transformation $\varphi$ and an optimal piecewise M\"{o}bius
transformation $m$ over $[0,1]$ such that
\begin{itemize}
\item $u_{p\circ\varphi\circ{m}}\doteq1$;\medskip
\item $\forall s\in[0,1]$, $\omega_{p\circ\varphi\circ{m}}(s)\neq0$.
\end{itemize}
\end{problem}


\section{Theory}

\label{sec:PRR}

\subsection{Property of $\varphi$}

\begin{theorem}
\label{th:avnonzero} For any $s\in[0,1]$, $\omega_{p\circ\varphi}(s)\neq0$.
\end{theorem}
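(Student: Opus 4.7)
The plan is to exploit the identity $\omega_{p\circ\varphi}(s) = (\omega_p\circ\varphi)(s)\cdot \varphi'(s)$ from \eqref{eqs:wpr} piecewise, verifying non-vanishing on each subinterval $[s_i,s_{i+1}]$ (endpoints included) and then combining these local statements into the global claim.

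First I would handle the interior of each piece. Since $\varphi_i$ is strictly increasing and continuous with image $[t_i,t_{i+1}]$ and, by the choice of $T$, $\omega_p$ has no zeros in $(t_i,t_{i+1})$, the factor $\omega_p(\varphi_i(s))$ is nonzero for $s\in(s_i,s_{i+1})$. A direct differentiation of each of the three branches of \eqref{eq:varphii} gives $\varphi_i'(s)>0$ on that open interval, so the product is nonzero there.

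The crux is the behavior at an endpoint where $\omega_p$ vanishes. I would use a local factorization of $F$ at such a zero: near $t_i$ one has $F(t) = (t-t_i)^{2\mu_i}G(t)$ with $G(t_i)\neq 0$, hence
\[
\omega_p(t) = (t-t_i)^{\mu_i}\,H(t) \qquad \text{for } t\ge t_i \text{ near } t_i,
\]
with $H$ continuous and $H(t_i)>0$ (the square root is the positive branch, and the denominator of $\omega_p$ is positive). For the first branch of \eqref{eq:varphii}, substituting
$\varphi_i(s)-t_i = \Delta t_i\,\tilde{s}^{1/(\mu_i+1)}$ and
$\varphi_i'(s) = \frac{\Delta t_i}{\Delta s_i(\mu_i+1)}\,\tilde{s}^{-\mu_i/(\mu_i+1)}$
cancels the fractional powers of $\tilde{s}$ exactly, yielding
\[
\omega_{p\circ\varphi_i}(s) = \frac{(\Delta t_i)^{\mu_i+1}}{\Delta s_i(\mu_i+1)}\,H(\varphi_i(s)),
\]
which extends continuously to $s=s_i$ with positive value. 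The symmetric substitution $\tilde s\leftrightarrow 1-\tilde s$ handles the second branch at $s=s_{i+1}$ when $\omega_p(t_{i+1})=0$. In the third (linear) branch, $\varphi_i'$ is a positive constant and $\omega_p\circ\varphi_i$ is nonzero throughout $[s_i,s_{i+1}]$ by the assumption that $\omega_p$ vanishes at neither $t_i$ nor $t_{i+1}$.

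Finally I would assemble these pieces. The only nontrivial matching is at an internal $s_i$ with $\omega_p(t_i)=0$: the constraint that two consecutive $t_j$'s cannot both be zeros of $\omega_p$ forces the left piece to be of second-branch type and the right piece to be of first-branch type, so the endpoint analysis above applies on both sides and both one-sided values of $\omega_{p\circ\varphi}$ at $s_i$ are nonzero. The main obstacle, and the whole point of the construction, is precisely this matching of rates: the blow-up of $\varphi_i'$ must exactly offset the vanishing of $\omega_p$, which is why the exponent $\mu_i+1$ is built into Definition~\ref{def:prt}. Once the local factorization $F=(t-t_i)^{2\mu_i}G(t)$ is exposed, the cancellation is routine bookkeeping.
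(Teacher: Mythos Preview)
Your proposal is correct and follows essentially the same route as the paper: both arguments factor $\omega_p$ on each subinterval as $(t-t_i)^{\mu_i}$ times a nonvanishing continuous function (your $H$, the paper's $\tilde{\zeta}$), compute $\varphi_i'$ explicitly, and observe that the fractional powers of $\tilde{s}$ cancel to leave $\dfrac{(\Delta t_i)^{\mu_i+1}}{(\mu_i+1)\Delta s_i}\,H(\varphi_i(s))\neq 0$. Your organization is slightly different---you separate interior points from endpoints and add a paragraph on matching across an internal $s_i$---but the mathematical content is the same case analysis and the same cancellation.
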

\begin{proof}
Taking derivative of $\varphi_i$, we have
\begin{equation*}
\varphi_i'(s)=\left\{
\begin{array}{ll}
\dfrac{\Delta t_i}{\mu_i+1}
        \cdot \dfrac{1}{\sqrt[\mu_i+1]{ \tilde{s}^{\mu_i}}}
        \cdot \dfrac{1}{\Delta s_i}\quad
&\mbox{if~}\omega_p(t_i)=0;\\
\dfrac{\Delta t_i}{\mu_{i+1}+1}
        \cdot\dfrac{1}{\sqrt[\mu_{i+1}+1]{(1- \tilde{s} )^{\mu_{i+1}}}}
        \cdot\dfrac{1}{\Delta s_i}\quad
&\mbox{if~}\omega_p(t_{i+1})=0;\\
\dfrac{\Delta t_i}{\Delta s_i}\quad
&\mbox{otherwise}.
\end{array}
\right.
\end{equation*}

Next we show that in the above three cases, $\omega_{p\circ \varphi}(s)\ne0$.

\begin{enumerate}[{Case} 1:]
\item
$\omega_p(t_i)=0.$

Assume that $\mu_i=\mult(\omega_p, t_i)$. Then $\omega_p$
can be written as
\[
\omega_p=|t-t_i|^{\mu_i}\cdot\tilde{\zeta}(t),\] where
$\tilde{\zeta}(t)>0$ for $t\in [t_i, t_{i+1}]$. Therefore,
\begin{align*}
\omega_{p\circ \varphi}(s)&=~|\varphi_i(s)-t_i|^{\mu_i}
        \cdot(\tilde{\zeta}\circ\varphi_i)(s)
        \cdot\varphi_i'(s)\\
&=(\Delta t_i \tilde{s}^{\frac{1}{\mu_i+1}})^{\mu_i}
        \cdot(\tilde{\zeta}\circ\varphi_i)(s)
        \cdot \left[\dfrac{\Delta t_i}{\mu_i+1}
        \cdot \dfrac{1}{\sqrt[\mu_i+1]{ \tilde{s}^{\mu_i}}}
        \cdot \dfrac{1}{\Delta s_i}\right]\\
&=\dfrac{\Delta t_i^{\mu_i+1}}{\mu_i+1}
        \cdot(\tilde{\zeta}\circ\varphi_i)(s)
        \cdot  \dfrac{1}{\Delta s_i}\\
&=\dfrac{\Delta t_i^{\mu_i+1}}{\mu_i+1}
        \cdot  \dfrac{1}{\Delta s_i}
        \cdot\tilde{\zeta}(t)
\neq0
\end{align*}
for $s\in[s_i,s_{i+1}]$.

\item
$\omega_p(t_{i+1})=0$.

Assume that $\mu_{i+1}=\mult(\omega_p, t_{i+1})$.
Then $\omega_p$ can be written as
\[
\omega_p=|t_{i+1}-t|^{\mu_{i+1}}\cdot\tilde{\zeta}(t),\] where
$\tilde{\zeta}(t)>0$ for $t\in [t_i, t_{i+1}]$. Therefore,
\begin{align*}
\omega_{p\circ \varphi}(s)&=\,|t_{i+1}-\varphi_i(s)|^{\mu_{i+1}}
        \cdot(\tilde{\zeta}\circ\varphi_i)(s)
        \cdot\varphi_i'(s)\\
&=\,\dfrac{\Delta t_i}{\mu_{i+1}+1}
        \cdot\dfrac{1}{\sqrt[\mu_{i+1}+1]{(1-\tilde{s})^{\mu_{i+1}}}}
        \cdot \dfrac{1}{\Delta s_i}\\
&=\,[\Delta t_i (1-\tilde{s})^{\frac{1}{\mu_{i+1}+1}}]^{\mu_{i+1}}\\
&\ \ \ \         \cdot(\tilde{\zeta}\circ\varphi_i)(s)
        \cdot \dfrac{\Delta t_i}{\mu_{i+1}+1}
        \cdot\dfrac{1}{\sqrt[\mu_{i+1}+1]{(1-\tilde{s})^{\mu_{i+1}}}}
        \cdot \dfrac{1}{\Delta s_i}\\
&=\,\dfrac{\Delta t_i^{\mu_{i+1}+1}}{\mu_{i+1}+1}
        \cdot(\tilde{\zeta}\circ\varphi_i)(s)
        \cdot \dfrac{1}{\Delta s_i}\\
&=\,\dfrac{\Delta t_i^{\mu_{i+1}+1}}{\mu_{i+1}+1}
        \cdot\tilde{\zeta}(t)
        \cdot \dfrac{1}{\Delta s_i}\neq0 .
\end{align*}

\item $\omega_p(t_i)\omega(t_{i+1})\neq0$.

Combining $\omega_p(t)\ne0$ for $t\in[t_i,t_{i+1}]$, ${\Delta t_i>0}$ and ${\Delta s_i>0}$, we have
\[\omega_{p\circ \varphi}(s)=(\omega_{p}\circ \varphi)(s)\cdot \varphi'(s)=\omega_p(t)\cdot\dfrac{\Delta t_i}{\Delta s_i}\ne0. \]
\end{enumerate}

To sum up, we have $\omega_{p\circ \varphi}(s)\neq 0$ when $s\in[s_i,s_{i+1}]$.
\end{proof}

\begin{example}[Continued from Example \ref{ex:ex1}]
\label{ex:ex2} For the cubic curve $p=(t, t^{3})$ whose angular speed
is
$
\omega_{p}=\dfrac{6\,t}{9\,t^{4}+1},
$
it is easy to see that $t=0$ is a zero of $\omega_{p}$ with multiplicity $1$. Let
$T=(0,1)$ and $S=(0,1)$. Then the constructed $\varphi$ is $\varphi
(s)=\sqrt{s}$. It follows that
\[
\omega_{p\circ\varphi}(s)=(\omega_{p}\circ\varphi)(s)\cdot\varphi^{\prime}(s)
=\dfrac{6\,\sqrt{s}}{9\,s^{2}+1}\cdot\dfrac{1}{2\,\sqrt{s}}=\dfrac{3}%
{9\,s^{2}+1}%
\]
which is nonzero over $[0,1]$.
\end{example}

\begin{remark}
It may be further deduced that $\omega_{p\circ\varphi}(s)$ is discontinuous at
$s=s_{i}$.
\end{remark}

\subsection{Choice of $T$}

By Definition \ref{def:prt}, $T$ should contain all the zeros of $\omega_{p}$
over $[0,1]$ and some intermediate points in the subintervals separated by the
zeros of $\omega_{p}$. One question is how to choose intermediate points to
make the uniformity improvement as significant as possible. In this
subsection, we present a strategy similar to the one introduced in
\cite{YWH2013IC1} for determining such points.

Recall \cite[Theorem 2]{YWH2013I} which states that the uniformizing parameter transformation $r_{p}$ of $p$
satisfies
\[
(r_{p})^{-1}=\int_{0}^{t}\,\omega_{p}(\gamma)d\gamma/\mu_{p}.
\]
Let $\varphi$ be a piecewise radical transformation associated to $p$. If
$(r_{p})^{-1}$ and $\varphi^{-1}$ share some common properties, we say
informally that $r_{p}$ and $\varphi$ are similar to each other.

First of all, the following can be derived:
\begin{equation}
\label{eqs:inversephi}\varphi^{-1}(t)= \left\{
\begin{array}
[c]{ll}%
s_{i}+\Delta s_{i}\cdot\tilde{t}^{\mu_{i}+1}\quad~ & \mbox{if~}\omega
_{p}(t_{i})=0;\\[8pt]%
s_{i}+\Delta s_{i}\cdot[1-(1-\tilde{t})^{\mu_{i+1}+1}]\quad~ &
\mbox{if~}\omega_{p}(t_{i+1})=0;\\[8pt]%
s_{i}+\Delta s_{i}\cdot\tilde{t}\qquad\quad~ & \mbox{otherwise},
\end{array}
\right.
\end{equation}
where $\tilde{t}=(t-t_{i})/\Delta t_{i}$. Furthermore,
\begin{align*}
\left[  \varphi^{-1}\right]  ^{\prime}  &  =~ \left\{
\begin{array}
[c]{ll}%
\dfrac{\Delta s_{i}}{\Delta t_{i}}\cdot(\mu_{i}+1)\cdot\tilde{t}^{\mu_{i}}
~\quad\qquad & \mbox{if~}\omega_{p}(t_{i})=0;\\[8pt]%
\dfrac{\Delta s_{i}}{\Delta t_{i}}\cdot(\mu_{i+1}+1)\cdot(1-\tilde{t}%
)^{\mu_{i+1}}\qquad\quad~ & \mbox{if~}\omega_{p}(t_{i+1})=0;\\[8pt]%
\dfrac{\Delta s_{i}}{\Delta t_{i}}\qquad\quad~ & \mbox{otherwise};
\end{array}
\right. 
\\[5pt]
\left[  \varphi^{-1}\right]  ^{\prime\prime}  &  =~ \left\{
\begin{array}
[c]{ll}%
\dfrac{\Delta s_{i}}{\Delta t_{i}^{2}}\cdot(\mu_{i}+1)\mu_{i}\cdot\tilde
{t}^{\mu_{i}-1}\quad & \mbox{if~}\omega_{p}(t_{i})=0;\\[8pt]%
-\dfrac{\Delta s_{i}}{\Delta t_{i}^{2}}\cdot(\mu_{i+1}+1)\mu_{i+1}%
\cdot(1-\tilde{t})^{\mu_{i+1}-1}\quad & \mbox{if~}\omega_{p}(t_{i+1}%
)=0;\\[8pt]%
0\qquad\quad~ & \mbox{otherwise}.
\end{array}
\right.  
\end{align*}

Note that $\varphi$ has the properties listed below.

\begin{itemize}
\item $\varphi^{-1}(0)=0$, $\varphi^{-1}(1)=1$.\smallskip

\item $\varphi_{i}^{-1}$ is monotonic over $(t_{i}, t_{i+1})$ because
$(\varphi_{i}^{-1})^{\prime}(t)\geq0$ for all $t\in(t_{i}, t_{i+1})$; since
$\varphi^{-1}$ is continuous over $[0,1]$, $\varphi^{-1}$ is monotonic over
$[0,1]$.\smallskip

\item $[\varphi_{i}^{-1}]^{\prime}$ is monotonic over $(t_{i}, t_{i+1})$
because $[\varphi_{i}^{-1}]^{\prime\prime}$ has a constant sign over $(t_{i},
t_{i+1})$.
\end{itemize}

The above properties indicate that $\varphi$ is composed of some monotonically
increasing convex or concave pieces.
Moreover, it can be verified that

\begin{itemize}
\item $r_{p}^{-1}(0)=\int_{0}^{0}\omega_{p}(\gamma)\,d\gamma/\mu_{p}=0$,
$r_{p}^{-1}(1)=\int_{0}^{1}\omega_{p}(\gamma)\,d\gamma/\mu_{p}=1$;\smallskip

\item $r_{p}^{-1}$ is monotonic over $[0,1]$ because $(r_{p}^{-1})^{\prime
}(t)=\omega_{p}(t)/\mu_{p}\geq0$.
\end{itemize}

One may observe that $\varphi$ shares the first two properties with $r_{p}$.
If $r_{p}$ possesses the third property of $\varphi$, then $r_{p}$ and
$\varphi$ are expected to be similar. This inspires us to divide $[0,1]$ into
some monotonic intervals of $(r_{p}^{-1})^{\prime}(t)$ (i.e., $\omega_{p}$).
Thus we may try to choose the intermediate $t_{i}$ in $T$ by solving
$$
\omega_{p}(t_{i})\omega_{p}^{\prime}(t_{i})=0.
$$
Note that $\omega_{p}(t)$ is nonnegative. Thus $0$ is the local minimum value
of $\omega_{p}$. In this sense, $T$ consists of all the local extreme points
of $\omega_{p}$ and the two boundary points of the unit interval.

With the above operation, $r_{p}$ is divided into some monotonically
increasing/decreasing convex or concave pieces with each piece having a
corresponding one in $\varphi$. Therefore, $T$ can be obtained by collecting
and inserting the zeros of $\omega_{p}$ and $\omega_{p}^{\prime}$
into $[0,1]$ in order.

\begin{example}[Continued from Example \ref{ex:ex2}]\label{ex:ex3}
One may compute that
\[
\omega_{p}'(t)=-\dfrac{6\,(27\,t^4-1)}{(9\,t^4+1)^2}.
\]
Then the solution of  $\omega_{p}(t)\omega_p'(t)=0$
over $[0,1]$ gives us a partition of $[0,1]$, i.e.,
\[T\doteq(0,0.439,1).\]
Furthermore, one may check that the multiplicities of $t_0,t_1,t_2$ as roots of $\omega_p$ are $1,0$ and $0$, respectively.
\end{example}

\subsection{Determination of $S$}

Once a partition $T$ of $[0,1]$ is obtained, one can compute the sequence $S$
in various ways. In this subsection, we present an optimization strategy for
the computation of $S$.

When $T$ is fixed, $u_{p\circ\varphi}$ becomes a function of $s_{i}%
~(i=1,\ldots,N-1)$. The following theorem provides a formula for computing the
optimal values for $s_i$'s.

\begin{theorem}
\label{th:optimaluniformity} The uniformity $u_{p\circ\varphi}$ reaches the
maximum when
\begin{equation}
\label{eqs:si}s_{i}=s_{i}^{*}=\dfrac{\sum_{k=0}^{i-1}\sqrt{L_{k}}}{\sum
_{k=0}^{N-1}\sqrt{L_{k}}},
\end{equation}
where{
\begin{equation}
\label{eqs:mk}L_{k}=\left\{
\begin{array}
[c]{ll}%
{\Delta t_{k}}\int_{t_{k}}^{t_{k+1}}\dfrac{\omega_{p}^{2}(t)}{(\mu
_{k}+1)\tilde{t}^{\mu_{k}}}\,dt\quad & \mbox{if}\quad\omega_{p}(t_{k})=0;\\[10pt]
{\Delta t_{k}}\int_{t_{k}}^{t_{k+1}}\dfrac{\omega_{p}^{2}(t)}{(\mu
_{k+1}+1)(1-\tilde{t})^{\mu_{k+1}}}\,dt\quad & \mbox{if}\quad\omega
_{p}(t_{k+1})=0;\\[10pt]
\Delta t_{k}\int_{t_{k}}^{t_{k+1}}\omega_{p}^{2}(t)\,dt\quad &
\mbox{otherwise.}
\end{array}
\right.
\end{equation}
}
The maximum value of $u_{p\circ\varphi}$ is
\[
u_{p\circ\varphi}^{*}=\mu_{p}^{2}\big/\eta_{p,\varphi}^{*},\quad
\mbox{where}\quad\eta_{p,\varphi}^{*}=\left(  \sum_{i=0}^{N-1}\sqrt{L_{i}%
}\right)  ^{2}.
\]
\end{theorem}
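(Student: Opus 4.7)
The plan is to reduce the theorem to a constrained optimization problem via \eqref{eq:upr}. Since $u_{p\circ\varphi} = \mu_p^2/\eta_{p,\varphi}$ and $\mu_p$ depends only on $p$, maximizing the uniformity is equivalent to minimizing $\eta_{p,\varphi} = \int_0^1 \omega_p^2(t)/(\varphi^{-1})'(t)\,dt$ over the choice of $S$.

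First I would split the integral as $\eta_{p,\varphi} = \sum_{i=0}^{N-1}\int_{t_i}^{t_{i+1}} \omega_p^2(t)/[\varphi_i^{-1}]'(t)\,dt$ and substitute the piecewise formula for $[\varphi_i^{-1}]'$ derived immediately before the theorem. A short case-by-case computation (one case per branch of \eqref{eq:varphii}) shows that the factor $\Delta t_i/\Delta s_i$ cleanly separates out of $[\varphi_i^{-1}]'$ and that the remaining integral, multiplied by $\Delta t_i$, is precisely $L_i$ in the notation of \eqref{eqs:mk}. Hence all three branches collapse to the uniform identity
\[
\eta_{p,\varphi} = \sum_{i=0}^{N-1} \frac{L_i}{\Delta s_i}.
\]

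Next I would exploit the telescoping constraint $\sum_{i=0}^{N-1} \Delta s_i = s_N - s_0 = 1$ (with $\Delta s_i > 0$) and apply the Cauchy--Schwarz inequality:
\[
\left(\sum_{i=0}^{N-1}\sqrt{L_i}\right)^{2} = \left(\sum_{i=0}^{N-1}\sqrt{\tfrac{L_i}{\Delta s_i}}\cdot\sqrt{\Delta s_i}\right)^{2} \le \left(\sum_{i=0}^{N-1}\frac{L_i}{\Delta s_i}\right)\left(\sum_{i=0}^{N-1}\Delta s_i\right) = \eta_{p,\varphi}.
\]
Equality in Cauchy--Schwarz holds iff $\Delta s_i \propto \sqrt{L_i}$; combined with $\sum_i \Delta s_i = 1$ this forces $\Delta s_i = \sqrt{L_i}/\sum_{k}\sqrt{L_k}$, and cumulating $s_i = \sum_{k=0}^{i-1}\Delta s_k$ reproduces exactly formula \eqref{eqs:si}. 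Substituting back gives $\eta_{p,\varphi}^{*} = \left(\sum_i \sqrt{L_i}\right)^2$, hence the claimed maximum $u_{p\circ\varphi}^{*} = \mu_p^2/\eta_{p,\varphi}^{*}$.

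The main obstacle I anticipate is the bookkeeping needed to verify that all three branches in \eqref{eq:varphii} really collapse to the common form $L_i/\Delta s_i$, so that the Cauchy--Schwarz step applies uniformly without separate treatment. A minor technical subtlety is that in cases 1 and 2 the integrands defining $L_i$ carry singular weights $1/\tilde{t}^{\mu_i}$ or $1/(1-\tilde{t})^{\mu_{i+1}}$; these are tamed by $\omega_p^2$ vanishing to order $2\mu_i$ (resp.\ $2\mu_{i+1}$) at the corresponding endpoint, since $\mult(\omega_p,t_i) = \mu_i$. This ensures each $L_i$ is finite and strictly positive, which in turn legitimates both the integral rewriting and the application of Cauchy--Schwarz.
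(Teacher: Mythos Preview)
Your reduction to $\eta_{p,\varphi}=\sum_{i}L_i/\Delta s_i$ is exactly what the paper does, and the case-by-case check of the three branches of $[\varphi_i^{-1}]'$ matches the paper's computation of the quantities it calls $I_i$. Where you diverge is at the optimization step: the paper sets $\partial\eta_{p,\varphi}/\partial s_i=0$, obtains the stationarity conditions $L_i/\Delta s_i^2=L_{i-1}/\Delta s_{i-1}^2$, solves them under $\sum_i\Delta s_i=1$, and appeals to the fact that $\eta_{p,\varphi}\to+\infty$ at the boundary of the feasible simplex to conclude the critical point is the global minimum. Your Cauchy--Schwarz argument reaches the same minimizer and the same optimal value, but in one stroke it certifies global optimality via the equality case rather than relying on a boundary-behavior argument to upgrade a stationary point. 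Both routes are correct; yours is slightly more self-contained, while the paper's makes the first-order condition (and hence the ratio $\Delta s_i\propto\sqrt{L_i}$) more visibly inevitable.
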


\begin{proof}
Recall \eqref{eq:upr}. Since $\mu_p$ is a constant for any given
$p$, the problem of maximizing $u_{p\circ \varphi}$ can be reduced
to that of minimizing
\[\eta_{p,\varphi}=\int_0^1\dfrac{\omega_p^2}{(\varphi^{-1})'}(t)\,dt
=\sum_{i=0}^{N-1}\int_{t_i}^{t_{i+1}}\dfrac{\omega_p^2}{(\varphi_i^{-1})'}(t)\,dt.\]
We first simplify each component in the above equation. Denote
$\int_{t_i}^{t_{i+1}}\dfrac{\omega_p^2}{(\varphi_i^{-1})'}(t)\,dt$
by $I_i$. Note that
\[
\dfrac{1}{(\varphi_i^{-1})'(t)}=\left\{
\begin{array}{ll}
\dfrac{\Delta t_i}{\Delta s_i}\cdot \dfrac{1}{\mu_i+1}\cdot\dfrac{1}{\tilde{t}^{\mu_i}}\quad&\mbox{if}\quad \omega_p(t_i)=0;\\
\dfrac{\Delta t_i}{\Delta s_i}\cdot \dfrac{1}{\mu_{i+1}+1}\cdot\dfrac{1}{(1-\tilde{t})^{\mu_{i+1}}}\quad&\mbox{if}\quad \omega_p(t_{i+1})=0;\\
\dfrac{\Delta t_i}{\Delta s_i}\quad& \mbox{otherwise}.
\end{array}
\right.
\]
When $\omega_p(t_i)=0$,
\[I_i=\dfrac{\Delta t_i}{\Delta s_i}\cdot\dfrac{1}{\mu_i+1}\int_{t_i}^{t_{i+1}}\dfrac{\omega_p^2}{\tilde{t}^{\mu_i}}\,dt=L_i/\Delta s_i.\]
Similarly, when $\omega_p(t_{i+1})=0$,
\[I_i=\dfrac{\Delta t_i}{\Delta s_i}\cdot\dfrac{1}{\mu_{i+1}+1}\int_{t_i}^{t_{i+1}}\dfrac{\omega_p^2}{(1-\tilde{t})^{\mu_{i+1}}}\,dt=L_i/\Delta s_i.\]
When $\omega_p(t_i)\cdot\omega_p(t_{i+1})\neq0$,
\begin{equation*}
I_i=\dfrac{\Delta t_i}{\Delta s_i}\cdot\int_{ _i}^{t_{i+1}}\omega_p^2\,dt={L_i}/{\Delta s_i}.
\end{equation*}
It is obvious that $\eta_{p,\varphi}=\sum_{i=0}^{N-1}I_i>0$; it
increases to $+\infty$ when $s_i$ approaches the
boundary of the feasible set of parameters. Now we compute the
extrema of $\eta_{p,\varphi}$. Let
\[\dfrac{\partial \eta_{p,\varphi}}{\partial s_i}=0,\]
i.e.,
\[\dfrac{L_i}{\Delta s_i^2}-\dfrac{L_{i-1}}{\Delta s_{i-1}^2}=0,\]
where $L_i$ is as in \eqref{eqs:mk}. Solving the above equation, we
obtain
\[\Delta s_i=\Delta s_i^*=\Delta s_0^*\sqrt{L_i/L_0}.\]
Note that $\sum_{i=0}^{N-1}\Delta s_i^*=1$. Thus
\[\Delta s_0^*=\left(\sum_{k=0}^{N-1}\sqrt{L_k/L_0}\right)^{-1},\quad s_i^*=\sum_{k=0}^{i-1}{\Delta s_k^*}=\dfrac{\sum_{k=0}^{i-1}\sqrt{L_k/L_0}}{\sum_{k=0}^{N-1}\sqrt{L_k/L_0}}=\dfrac{\sum_{k=0}^{i-1}\sqrt{L_k}}{\sum_{k=0}^{N-1}\sqrt{L_k}}.\]
Therefore,
\[
\Delta s_i^*=\dfrac{\sqrt{L_i}}{\sum_{k=0}^{N-1}\sqrt{L_k}}.
\]
Moreover, the optimal value of $\eta_{p,\varphi}$ is
\[\eta_{p,\varphi}=\eta_{p,\varphi}^*=\sum_{i=0}^{N-1}\dfrac{L_i}{\Delta s_i^*}=\sum_{k=0}^{N-1}\dfrac{L_i}{\dfrac{\sqrt{L_i}}{\sum_{i=0}^{N-1}\sqrt{L_k}}}
=\left(\sum_{k=0}^{N-1}\sqrt{L_k}\right)^2,\] from which it follows
that the optimal value of $u_{p\circ \varphi}$ is
\[u_{p\circ \varphi}=u_{p\circ \varphi}^*=\dfrac{\mu_p^2}{\eta_{p,\varphi}^*}=\dfrac{\mu_p^2}{\left(\sum_{k=0}^{N-1}\sqrt{L_k}\right)^2}.\]
The proof is completed.
\end{proof}

\begin{example}[Continued from Example \ref{ex:ex3}]\label{ex:ex4}
By using \eqref{eqs:mk}, we compute the values of $L_0$ and $L_1$ and obtain
\begin{align*}
L_0&\doteq0.439\int_{0}^{0.439}\dfrac{\left(\dfrac{6\,t}{9\,t^{4}+1}\right)^{2}}{2\cdot \dfrac{t-0}{0.439}}\,dt\doteq0.276,\\
L_1&\doteq(1-0.439)\int_{0.439}^{1}\left(\frac{6\,t}{9\,t^{4}+1}\right)^{2}\,dt\doteq0.590.
\end{align*}
By \eqref{eqs:si},
we have $s_1=0.406$. Thus $S\doteq(0,0.406,1)$. Furthermore, one may calculate the optimal value of $u_{p\circ\varphi}$ and obtain $u_{p\circ\varphi}^*\doteq0.932$.
\end{example}

\subsection{Determination of $Z$ and $\alpha$}

Once a partition $S$ of $[0,1]$ is obtained, one can compute the sequence $Z$.
In this subsection, we give explicit formulae for the optimal values of $Z$ and
$\alpha$ which are directly computed from the sequence $T$. For this purpose,
we first recall the the following result from \cite{YWH2012IC0}.

\begin{theorem}
\label{thm:optimalZalpha} Let $q$ be a rational parameterization such that
$\omega_{q}(s)\ne0$ over $[0,1]$ and $m$ be a piecewise M\"{o}bius
transformation determined by $S$, $Z$ and $\alpha$. For a given sequence $S$,
the uniformity $u_{q\circ m}$ reaches the maximum when
\begin{equation}
\label{eqs:alphaZ}\alpha_{i}=\alpha_{i}^{*}=\dfrac{1}{1+\sqrt{C_{i}/A_{i}}%
},\quad z_{i}=z_{i}^{*}=\dfrac{\sum_{k=0}^{i-1}\sqrt{M_{k}}}{\sum_{k=0}%
^{N-1}\sqrt{M_{k}}},%
\end{equation}
where{
\begin{align*}
A_{i}  &  =\int_{s_{i}}^{s_{i+1}}\omega_{q}^{2}\cdot(1-\tilde{s})^{2}ds, &
B_{i}  &  =\int_{s_{i}}^{s_{i+1}}\omega_{q}^{2}\cdot2\tilde{s}(1-\tilde
{s})ds,\\
C_{i}  &  =\int_{s_{i}}^{s_{i+1}}\omega_{q}^{2}\cdot\tilde{s}^{2}ds, & M_{k}
&  =\Delta s_{k}\left(  2\sqrt{A_{k}C_{k}}+B_{k}\right)  .
\end{align*}
Let $m^{*}$ be the piecewise M\"{o}bius transformation determined by $S$,
$Z^{*}$ and $\alpha^{*}$. Then the maximum value of $u_{q\circ m}$ is
$u_{q\circ m^{*}}=\mu_{q}^{2}/\eta_{q,m^{*}}$ where
\[
\eta_{q,m^{*}}=\left(  \sum_{i=0}^{N-1}\sqrt{M_{k}}\right)  ^{2}.%
\]}
\end{theorem}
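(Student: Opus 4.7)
The plan is to follow the same recipe as in Theorem~\ref{th:optimaluniformity}: since $u_{q\circ m}=\mu_q^2/\eta_{q,m}$ and $\mu_q$ does not depend on $m$, maximizing the uniformity reduces to minimizing
\[
\eta_{q,m}=\sum_{i=0}^{N-1} I_i,\qquad I_i = \int_{s_i}^{s_{i+1}}\frac{\omega_q^2(s)}{(m_i^{-1})'(s)}\,ds.
\]
The key structural observation is that $I_i$ depends only on the local parameters $\alpha_i$ and $\Delta z_i$, so the minimization decouples and can be carried out in two stages: first optimize each $\alpha_i$ with the partition $Z$ held fixed, then optimize $Z$ itself.

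To make $I_i$ explicit, I would first invert \eqref{eqs:moebiuspiece} by solving $s=m_i(z)$ for $z$, obtaining $m_i^{-1}(s)=z_i+\Delta z_i\cdot \alpha_i\tilde{s}/[(1-\alpha_i)(1-\tilde{s})+\alpha_i\tilde{s}]$, and then differentiate to get
\[
\frac{1}{(m_i^{-1})'(s)}=\frac{\Delta s_i}{\Delta z_i}\cdot\frac{\bigl[(1-\alpha_i)(1-\tilde{s})+\alpha_i\tilde{s}\bigr]^2}{\alpha_i(1-\alpha_i)}.
\]
Expanding the bracketed square as $(1-\alpha_i)^2(1-\tilde{s})^2+2\alpha_i(1-\alpha_i)\tilde{s}(1-\tilde{s})+\alpha_i^2\tilde{s}^2$ and substituting separates the three integrals that define $A_i$, $B_i$, and $C_i$, producing
\[
I_i=\frac{\Delta s_i}{\Delta z_i}\left[\frac{(1-\alpha_i)A_i}{\alpha_i}+B_i+\frac{\alpha_i C_i}{1-\alpha_i}\right].
\]
The stationarity condition $\partial I_i/\partial \alpha_i=0$ then gives $A_i/\alpha_i^2=C_i/(1-\alpha_i)^2$; since the hypothesis $\omega_q(s)\ne 0$ forces $A_i,C_i>0$, the unique solution in $(0,1)$ is $\alpha_i^*=1/(1+\sqrt{C_i/A_i})$, and at this value the bracket collapses to $2\sqrt{A_iC_i}+B_i$, so $I_i^* = M_i/\Delta z_i$ with $M_i$ exactly as in the statement.

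The remaining subproblem is to minimize $\sum_i M_i/\Delta z_i$ over $\Delta z_i>0$ with $\sum_i \Delta z_i=1$, which is the same convex problem already handled in Theorem~\ref{th:optimaluniformity}: the first-order condition $M_{i-1}/\Delta z_{i-1}^2=M_i/\Delta z_i^2$ forces $\Delta z_i^*\propto\sqrt{M_i}$, and normalization yields the stated formulas for $z_i^*$ and for $\eta_{q,m^*}=(\sum_k\sqrt{M_k})^2$ in \eqref{eqs:alphaZ}. The main obstacle will be the algebraic bookkeeping in the inversion step: carrying the $\alpha_i(1-\alpha_i)$ factor correctly through the derivative of the M\"obius fraction, since a single sign or factor error would propagate into the formula for $\alpha_i^*$ and hence into $M_i$. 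Everything afterwards---the two one-variable minimizations and the check that the interior critical points are genuine minima rather than maxima or boundary values---is mechanical.
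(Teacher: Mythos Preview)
Your proposal is correct. Note, however, that the paper does not actually prove Theorem~\ref{thm:optimalZalpha}; it merely recalls it from \cite{YWH2012IC0}. Your argument---reducing to the minimization of $\eta_{q,m}$, computing $1/(m_i^{-1})'$ explicitly, splitting the resulting quadratic in $\tilde{s}$ into the $A_i,B_i,C_i$ integrals, and then performing the two decoupled optimizations in $\alpha_i$ and $\Delta z_i$---is exactly the template of the paper's own proof of Theorem~\ref{th:optimaluniformity} transplanted to the M\"obius setting, and is almost certainly how the cited reference proceeds as well.
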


\begin{remark}
Let $\varphi$ be an elementary radical transformation as in Definition
\ref{def:prt} and $q=p\circ\varphi$. Note that
\[
\mu_{q}=\int_{0}^{1}\omega_{q}ds=\int_{0}^{1}\omega_{p\circ\varphi}ds=\int%
_{0}^{1}(\omega_{p}\circ\varphi)(s)\cdot\varphi^{\prime}(s)ds=\int_{0}%
^{1}\omega_{p}dt=\mu_{p}.
\]
Thus
\begin{equation*}
u_{p\circ\varphi\circ m^{*}}=u_{q\circ m^{*}}=\mu_{q}%
^{2}/\eta_{q,m^{*}}=\mu_{p}^{2}\bigg/\left(  \sum_{i=0}^{N-1}\sqrt{M_{k}%
}\right)  ^{2}.
\end{equation*}

\end{remark}

Let $\varphi$ and $q$ be defined as before. By Theorem \ref{th:avnonzero},
$\omega_{q}\ne0$ over $[0,1]$. One may compute the optimal values of $S$, $\alpha$
and $Z$ by Theorems \ref{th:optimaluniformity} and \ref{thm:optimalZalpha}. However, $p\circ\varphi$ is a
composition of radical function and rational function and the composition will
cause an increase of complexity because $\omega_{q}$ is radical. In what
follows, we simplify the formulae for $A_{i},B_{i}$ and $C_{i}$ with the goal
of computing the values of $A_{i},B_{i}$ and $C_{i}$ directly from $p$.

The formula of $A_i~(0\le i\le N-1)$ is derived via the following steps:
\begin{align*}
A_{i}  &=\int_{s_i}^{s_{i+1}}\omega_q^2\cdot(1-\tilde{s})^2ds\\
&=\int_{s_i}^{s_{i+1}}[(\omega_p\circ\varphi)(s)]^2\cdot[\varphi'(s)]^2\cdot(1-\tilde{s})^2ds\\
&=\int_{s_i}^{s_{i+1}}[(\omega_p\circ\varphi)(s)]^2\cdot[\varphi'(s)]\cdot(1-\tilde{s})^2\left[\varphi'(s) ds\right]\\
&=\int_{t_i}^{t_{i+1}}\dfrac{\omega_p^2}{\left(\varphi^{-1}\right)'}(t)\cdot\left(1-\dfrac{\varphi^{-1}-s_i}{\Delta s_i}\right)^2dt\qquad\qquad\qquad\qquad\quad\text{by~\eqref{eqs:inversephi}}\\
&=\left\{
\begin{array}{ll}
\int_{t_i}^{t_{i+1}}\dfrac{\omega_p^2}{\left(\varphi^{-1}\right)'}(t)\cdot(1-\tilde{t}^{\mu_i+1})^2dt
&\text{if~}\omega(t_i)=0;\\
\int_{t_i}^{t_{i+1}}\dfrac{\omega_p^2}{\left(\varphi^{-1}\right)'}(t)\cdot(1-\tilde{t})^{2(\mu_{i+1}+1)}dt
&\text{if~}\omega(t_{i+1})=0;\\
\int_{t_i}^{t_{i+1}}\dfrac{\omega_p^2}{\left(\varphi^{-1}\right)'}(t)\cdot(1-\tilde{t})^2dt
&\text{otherwise};
\end{array}
\right.\\
&  =\left\{
\begin{array}
[c]{lll}%
\dfrac{\Delta t_{i}}{\Delta s_{i}}\int_{t_{i}}^{t_{i+1}}\dfrac{\omega_{p}%
^{2}(t)}{(\mu_{i}+1)\tilde{t}^{\mu_{i}}}\cdot(1-\tilde{t}^{\mu_{i}+1}%
)^{2}dt &  &\hspace{7em} \text{if~}\omega(t_{i})=0;\\[10pt]%
\dfrac{\Delta t_{i}}{\Delta s_{i}}\int_{t_{i}}^{t_{i+1}}\dfrac{\omega_{p}%
^{2}(t)}{\mu_{i+1}+1}\cdot(1-\tilde{t})^{\mu_{i+1}+2}dt &  &\hspace{7em} \text{if~}%
\omega(t_{i+1})=0;\\[10pt]%
\dfrac{\Delta t_{i}}{\Delta s_{i}}\int_{t_{i}}^{t_{i+1}}{\omega_{p}^{2}%
}(t)\cdot(1-\tilde{t})^{2}dt &  &\hspace{7em} \text{otherwise.}%
\end{array}
\right.
\end{align*}
Similarly, we have
\begin{align*}
B_{i}  &  =\left\{
\begin{array}
[c]{ll}%
\dfrac{\Delta t_{i}}{\Delta s_{i}}\int_{t_{i}}^{t_{i+1}}\dfrac{\omega_{p}%
^{2}(t)}{\mu_{i}+1}\cdot2\tilde{t}(1-\tilde{t}^{\mu_{i}+1})dt &
\hspace{2.3em}\text{if~}\omega(t_{i})=0;\\[10pt]%
\dfrac{\Delta t_{i}}{\Delta s_{i}}\int_{t_{i}}^{t_{i+1}}\dfrac{\omega_{p}%
^{2}(t)}{\mu_{i+1}+1}\cdot2[1-(1-\tilde{t})^{\mu_{i+1}+1}](1-\tilde{t}%
)^{}dt & \hspace{2.3em}\text{if~}\omega(t_{i+1})=0;\\[10pt]%
\dfrac{\Delta t_{i}}{\Delta s_{i}}\int_{t_{i}}^{t_{i+1}}{\omega_{p}^{2}%
}(t)\cdot(1-\tilde{t})^{2}dt & \hspace{2.3em}\text{otherwise;}%
\end{array}
\right.\\
C_{i}  &  =\left\{
\begin{array}
[c]{ll}%
\dfrac{\Delta t_{i}}{\Delta s_{i}}\int_{t_{i}}^{t_{i+1}}\dfrac{\omega_{p}%
^{2}(t)}{\mu_{i}+1}\cdot\tilde{t}^{\mu_{i}+2}dt & \text{if~}\omega
(t_{i})=0;\\[10pt]%
\dfrac{\Delta t_{i}}{\Delta s_{i}}\int_{t_{i}}^{t_{i+1}}\dfrac{\omega_{p}%
^{2}(t)}{(\mu_{i+1}+1)(1-\tilde{t})^{\mu_{i+1}}}\cdot[1-(1-\tilde{t}%
)^{\mu_{i+1}+1}]^{2}dt & \text{if~}\omega(t_{i+1})=0;\\[10pt]%
\dfrac{\Delta t_{i}}{\Delta s_{i}}\int_{t_{i}}^{t_{i+1}}{\omega_{p}^{2}%
}(t)\cdot(1-\tilde{t})^{2}dt & \text{otherwise.}%
\end{array}
\right.
\end{align*}

\begin{example}[Continued from Example \ref{ex:ex4}]\label{ex:ex5}
With the above formulae and $T$, $S$ as in Examples \ref{ex:ex3} and \ref{ex:ex4}, one may obtain the following:
\[
\begin{array}{lllll}
A_0\doteq0.258,&\quad&B_0\doteq0.229,&\quad&C_0\doteq0.193,\\
A_1\doteq0.518,&&B_1\doteq0.317,&&C_1\doteq0.159.
\end{array}
\]
Thus
\begin{align*}
M_0&\doteq0.406(2\sqrt{0.258\cdot 0.193}+0.229)\doteq0.274,\\
M_1&\doteq(1-0.406)(2\sqrt{0.518\cdot 0.159}+0.317)\doteq0.529.
\end{align*}
By Theorem \ref{thm:optimalZalpha}, we obtain
\begin{align*}
\alpha_0&\doteq1/(1+\sqrt{0.193/0.258})\doteq0.536,\\
\alpha_1&\doteq1/(1+\sqrt{0.159/0.518})\doteq0.643,
\end{align*}
and $z_1^*\doteq0.419$. Thus $\alpha\doteq(0.536,0.643)$ and $Z\doteq(0,.419,1)$.
One may further calculate
\[u_{q\circ m^{*}}\doteq\dfrac{1.249^2}{(\sqrt{0.274}+\sqrt{0.529})^2}\doteq0.997.\]
\end{example}


\section{Algorithm}

\label{sec:AlgorithmExample}

In this section, we summarize the above ideas and results as
Algorithm~\ref{alg:radreparam} and illustrate how the algorithm works for the cubic curve in Example \ref{ex:ex1}.

\begin{algorithm}[htb]
\caption{\sf Optimal\_Radical\_Transformation} \label{alg:radreparam} \rm\mbox{}
\begin{glists}{0em}{4em}{0em}{0.2em}
\iteml{Input:}  $p$, a rational parameterization of a plane curve.
\iteml{Output:} $r$, the optimal piecewise radical transformation of $p$ such that $u_{p\circ r}>u_p$.
\end{glists}
\begin{glists}{0em}{2em}{0em}{0.2em}
\iteml{1.} Compute $\omega_p$ and $\mu_p$ using \eqref{eq:w}, $u_p$ using \eqref{eq:up} and $\omega_p'$.
\iteml{2.} Solve $\omega_p\omega_p'=0$ and get $T$.
\iteml{3.} Compute $S$,
$Z$, $\alpha$ and $u$ using \eqref{eqs:si} and \eqref{eqs:alphaZ}.
\iteml{4.} Construct $\varphi$ with $T$, $S$ and $m$ with $S$, $Z$, $\alpha$ using \eqref{eq:varphii} and \eqref{eqs:moebiuspiece}.
\iteml{5.} $r \leftarrow  \varphi \circ m$.
\iteml{6.} Return $r$.
\end{glists}
\end{algorithm}

\begin{example}[Continued from Example \ref{ex:ex5}] Given $p=(t,t^3)$, after the above calculation, one may obtain
\[T\doteq(0,0.439,1),\ \
S\doteq(0,0.406,1),\ \
Z\doteq(0,.419,1),\ \
\alpha\doteq(0.536,0.643).\]
Then one may construct $\varphi$ with $T$ and $S$, and $m$ with $S$, $Z$ and $\alpha$, and obtain
\begin{align*}
\varphi\doteq&
\begin{cases}
0.688\sqrt{s}&\text{if}\quad 0.000\le s\le 0.406;\\
0.055+0.945s&\text{if}\quad 0.406\le s\le 1.000;
\end{cases}\\
m\doteq&\begin{cases}
\dfrac{-0.450z}{0.172z-0.536}&\text{if}\quad 0.000\le z\le 0.049;\\[8pt]
-\dfrac{0.165z+0.192z}{0.492z-0.849}&\text{if}\quad 0.419\le z\le 1.000.
\end{cases}
\end{align*}
Then
the optimal transformation $r$ is constructed below.
\[r=\varphi\circ m\doteq
\begin{cases}
0.462\,\sqrt {-{\dfrac {z}{ 0.172z- 0.536}}} & \text{if}\quad 0.000\le z\le 0.419;\\[10pt]
\dfrac{-0.129z - 0.228}{0.492z - 0.849} & \text{if}\quad 0.419\le z\leq  1.000.
\end{cases}\]
With the optimal radical transformation $r$, one may construct $p\circ r$ and obtain
\[p\circ r\doteq
\begin{cases}
\left({\dfrac {- 0.079\,\sqrt {z} \left( z- 3.116 \right) }{ \left( - 0.172\,z+ 0.5359 \right) ^{3/2}}}, {\dfrac {0.098\,{z}^{3/2}}{ \left( - 0.172\,z+ 0.536 \right) ^{3/2}}}\right) &\text{if}\quad  0.000\le z\le 0.419;\\[10pt]
\left({\dfrac {- 0.129\,z- 0.228}{ 0.492\,z- 0.849}},- {\dfrac { 0.002\,\left( z+ 1.771 \right) ^{3}}{ \left(  0.492\,z- 0.849 \right) ^{3}}}\right) & \text{if}\quad 0.419\le z\leq  1.000.
\end{cases}\]
The angular speed function of $p\circ r$ is
\[w_{p\circ r}\doteq\begin{cases}\dfrac{0.781}{{z}^{2}- 0.420\,z+ 0.655}& \text{if}\quad 0.000\le z\leq  0.4187;\\[10pt]{\dfrac { - 1.379\,\left( z+ 1.771 \right)  \left( z- 1.725 \right) }{ \left( {z}^{2}- 1.456\,z+ 0.899 \right)  \left( {z}^{2}- 4.876\,z+ 9.888 \right) }} & \text{if}\quad 0.4187\le z\leq  1.000.\end{cases}\]
Furthermore, one may calculate its uniformity $u_{p\circ r}\doteq0.997$.

The plots of $p$ and $p\circ r$ as well as the behavior of their angular speed functions are shown below:

\medskip

\noindent \includegraphics[width=2.0in]{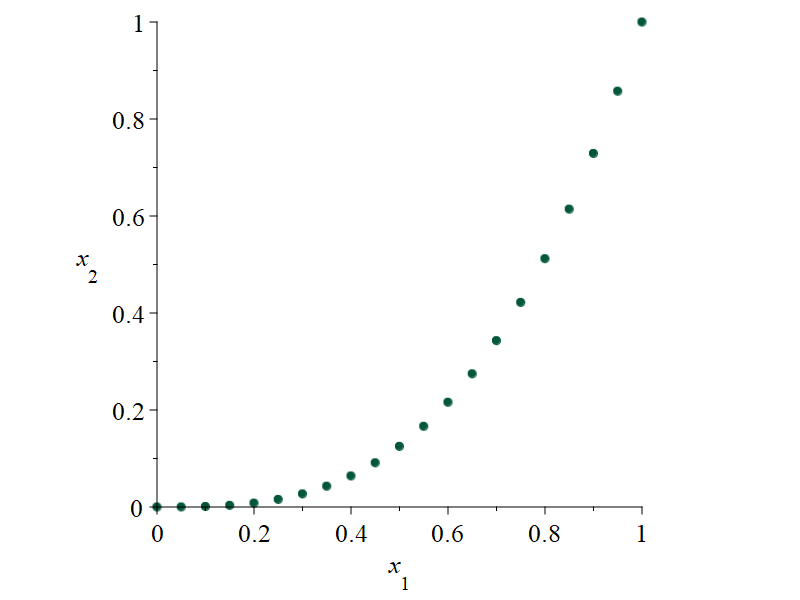} ~~
\includegraphics[width=2.0in]{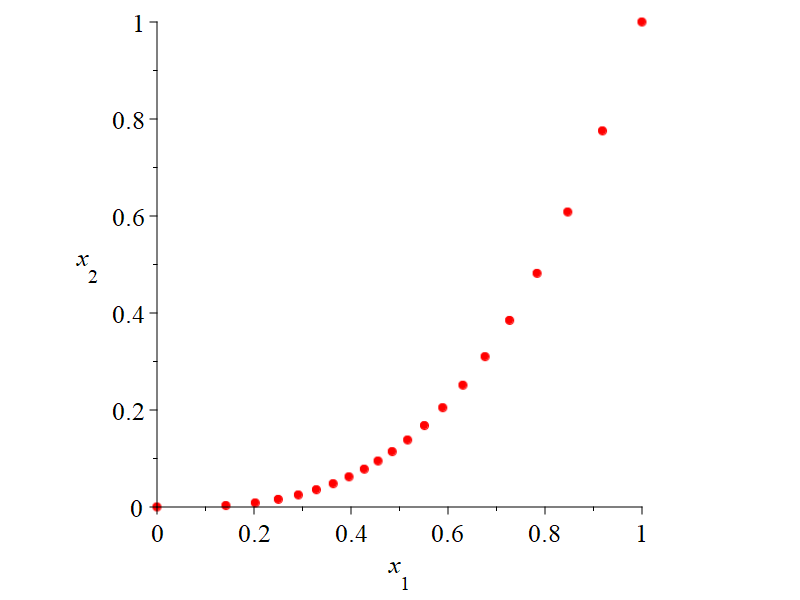} ~~
\includegraphics[width=2.0in]{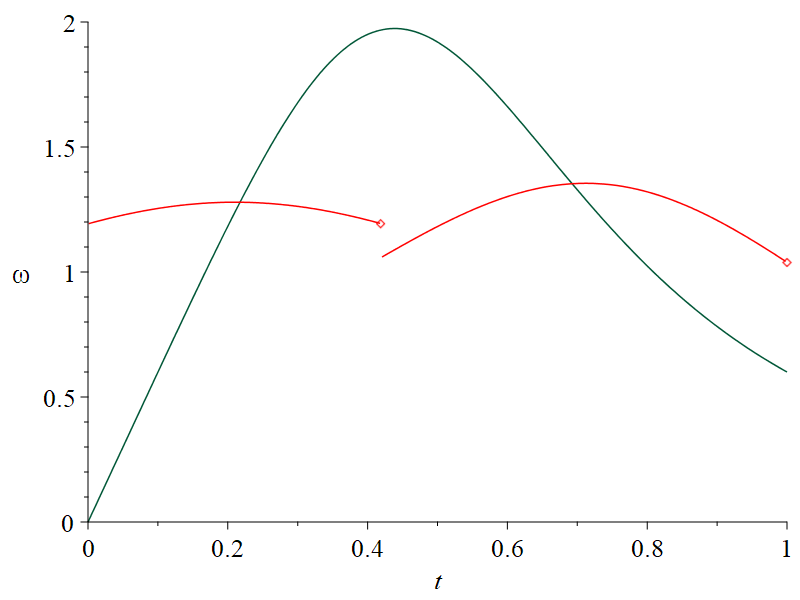}

\noindent where
\begin{itemize}
\item
the left plot shows the equi-sampling of the original parameterization $p$
(green);

\item
the middle plot shows the equi-sampling of the optimal piecewise radical reparameterization $p\circ r$ (red);

\item
the right plot shows the angular speed functions of $p$  and
$p\circ r$.
 \end{itemize}

\noindent It is seen that the angular speed uniformity is greatly improved by the piecewise
radical reparameterization.
\end{example}

\section{Implementational Issues/Suggestions}
\label{sec:implementation}

If one chooses to implement the proposed algorithm  using floating-point
arithmetic, then, as usual, one should be careful to avoid numerical instability.

For instance, if  $L_{i}$ is computed by
\eqref{eqs:mk} naively, then it leads to instability.
For example, $t_{i}=1/\sqrt{2}$ is a zero of $\omega_{p}$ with
multiplicity $1$, so
\[
\omega_{p}(t)=|t-1/\sqrt{2}|\cdot{m}(t),
\]
where ${m}(1/\sqrt{2})\neq0$. The numeric solution over $[0,1]$ is
$t_{i}=0.707$. Thus
\[
L_{i}\, =\,{\Delta t_{i}}\int_{t_{i}}^{t_{i+1}}\dfrac{\omega_{p}^{2}%
(t)}{(\mu_{i}+1)\tilde{t}^{\mu_{i}}}\,dt  =\,\Delta t_{i}^{2}\int_{t_{i}}^{t_{i+1}}\dfrac{(t-1/\sqrt{2})^{2}\cdot
{m}^{2}(t)}{2\,(t-0.707)}\,dt.
\]
During integration, it is necessary to evaluate the integral at $t=0.707$.
When~$t$ approaches $0.707$, the integral quickly increases to $+\infty$,
causing numerical instability.

To avoid such cases, one could adopt a technique from
symbolic computation to represent algebraic numbers.
 Suppose that $t=\gamma$ is a zero of $\omega_{p}(t)$ with multiplicity
$\mu_{i}$ and $t_{i}$ is its numerical approximation. By \eqref{eq:w},
$\omega_{p}^{2}(t)$ is a rational function. Let $G$ and $H$ be its numerator
and denominator. Then $t=\gamma$ is a zero of $G$ with multiplicity
$2\,\mu_{i}$. Carrying out the Euclidean division with $G$ as the dividend and
$(t-\gamma)^{2\,\mu_{i}}$ as the divisor, we obtain
\[
G(t)=(t-\gamma)^{2\,\mu_{i}}Q(t,\gamma)+R(t,\gamma).
\]
Since $t=\gamma$ is a zero of $G$ with multiplicity $2\,\mu_{i}$, it is also a
zero of $R(t,\gamma)$ with multiplicity at least $2\,\mu_{i}$. Given that $\deg(R,
t)<2\,\mu_{i}$, $R(t,\gamma)$ must be zero, which leads to the following
conclusion:
\[
L_{i}\,=  \,{\Delta t_{i}}\int_{t_{i}}^{t_{i+1}}\dfrac{\omega_{p}^{2}%
(t)}{(\mu_{i}+1)\,\tilde{t}^{\mu_{i}}}\,dt\doteq  \,\dfrac{\Delta t_{i}^{\mu_{i}+1}}{\mu_{i}+1}\cdot\int_{t_{i}%
}^{t_{i+1}}\dfrac{Q(t,t_{i})(t-t_{i})^{\mu_{i}}}{H(t)}\,dt.
\]

\paragraph{Acknowledgements}

Hoon Hong's work was supported by National Science Foundations of USA under Grant No. 1813340 and
Jing Yang's work was supported by National Natural Science Foundation of China
under Grant Nos. 11526060 and 12261010.

\nocite{*}
\bibliographystyle{eptcs}
\bibliography{generic}
\end{document}